\newtheorem{theorem}{Theorem}
\author{Yuriy Shablya\affiliationmark{1}
  \and Dmitry Kruchinin\affiliationmark{2}}
\title[Algorithms for ranking and unranking RNA secondary structures]{Algorithms for ranking and unranking \\ the combinatorial set of RNA secondary structures}
\affiliation{
  Tomsk State University of Control Systems and Radioelectronics, Tomsk, Russia}
\keywords{combinatorial generation, AND/OR tree, RNA secondary structure, Motzkin word, ranking, unranking}
\begin{document}
\publicationdata
{vol. XXX:XXX}
{2023}
{XXX}
{10.46298/dmtcs.XXX}
{2023-1-27}
{XXX}
\maketitle
\begin{abstract}
In this paper, we study the combinatorial set of RNA secondary structures of length $n$ with $m$ base-pairs.
For~a~compact representation, we encode an RNA secondary structure by the corresponding Motzkin word.
For this combinatorial set, we construct an~AND/OR tree structure, find a bijection between the combinatorial set and the set of variants of the AND/OR tree, and develop algorithms for ranking and unranking the variants of the AND/OR tree.
The developed ranking and unranking algorithms have polynomial time complexity $O(m^2 (n - m))$ for $m < n - 2 m$ and $O(m (n - m)^2)$ for $m > n - 2 m$.
In contrast to the existing algorithms, the new algorithms do not require preprocessing steps and have better time complexity.
\end{abstract}

\section{Introduction}

Combinatorial generation is a branch of science that lies at the intersection of computer science and combinatorics and develops methods and algorithms for ranking and unranking combinatorial sets.
Fundamentals of combinatorial generation can be found in~\cite{Kreher1999}, \cite{Ruskey2003} and \cite{Knuth2011}.
Many~information objects can be represented as a combinatorial set, for example, in the~form of permutations, partitions, trees, paths, and others.
In this case, it is possible to use combinatorial generation methods for processing such sets, their encoding, generation, and storage.
This is especially true when processing large combinatorial sets (when studying different chemical and biological structures).

In this article, we propose to consider the structure of RNA (ribonucleic acid). 
RNA is one of the~main and most important elements of the cellular structure of any organism.
An RNA molecule is a~chain of nucleotides containing four nitrogenous bases: A (adenine), C (cytosine), G (guanine), and U (uracil).
The~sequence of nucleotides in the RNA chain forms the RNA primary structure.
In~mathematical terms, an RNA primary structure is a sequence of elements taken from the set $\{ A, C, G, U \}$, i.e. $a = (a_1, a_2, \ldots )$ where $a_i \in \{ A, C, G, U \}$.

\newpage
In addition, hydrogen bonds can be formed between two bases in the RNA chain (for example, Watson-Crick base-pairs for A--U and C--G).
The formation of hydrogen bonds in an RNA molecule leads to different loops in it, which complicates the~structure of the RNA molecule.
The obtained shape of the~RNA chain with hydrogen bonds forms an RNA secondary structure.
In mathematical terms, an RNA secondary structure can be represented in the form of a graph, where the nodes of the~graph are nucleotides and the edges of the graph show the connections between bases.
For more details on the features of RNA structures, see \cite{Saenger1984}, \cite{Reidys2011} and~\cite{Waterman2018}.

RNA molecules play an important role in cellular nucleic acid processing and gene expression.
According to~\cite{Picardi2015}, a~deeper understanding of these processes can be obtained through the detailed study of RNA structures.
For example, one such research direction is the~prediction of RNA secondary structures: \cite{Zuker1984, Oluoch2019, Sato2021, Sato2022}.
Moreover, the improvement of such research works through the~development of computational algorithms is one of the main directions of modern bioinformatics.

Let us consider the combinatorial set of RNA primary structures, which will be denoted by $A_n$.
The set of RNA primary structures is described by one parameter $n$ (the number of nucleotides in an RNA chain).
This combinatorial set can be represented as the set of $n$-permutations of $m$ elements with repetitions where $m = 4$.
That is, $n$ elements are selected from the set of $m = 4$ elements $\{A, C, G, U \}$ and these elements can be repeated.
Thus, algorithms for ranking and unranking the set of such permutations can also be used for the combinatorial set of RNA primary structures.

Next, we consider the combinatorial set of RNA secondary structures, which will be denoted by $A_{n,m}$.
This combinatorial set is described by two parameters: $n$ is the number of nucleotides in an RNA chain, $m$ is the number of base-pairs.
There are also researches related to the development of combinatorial generation algorithms for the set of RNA secondary structures: \cite{Nebel2011, Seyedi-Tabari2010, Mohammadi2018}.

\cite{Nebel2011} presented algorithms for generating a set of $m$ random RNA secondary structures of length $n$ (without specifying the number of base-pairs).
These combinatorial generation algorithms were developed using the method of~\cite{Flajolet1994}.
As well as in~\cite{Viennot1985}, an RNA secondary structure was encoded by a correct bracket sequence with the possibility of adding spaces (or~a~Motzkin word).
The developed algorithms have time complexity $O(m n \log{n})$ and require a preprocessing step with time complexity $O(n^2)$.
The main disadvantage of this approach is the~lack of control over the number of base-pairs in the generated RNA secondary structures, as well as the absence of ranking and unranking algorithms.

\cite{Seyedi-Tabari2010} presented algorithms for listing, ranking, and unranking the combinatorial set of RNA secondary structures of length $n$ with $m$ base-pairs.
Then \cite{Mohammadi2018} presented a parallel version of the listing algorithm.
These combinatorial generation algorithms were developed using an approach based on the calculation of sequence prefixes, that is similar to the~method of~\cite{Ryabko1998}.
In this case, an RNA secondary structure was encoded by a tree with $n$ internal nodes and $m$ external nodes, which was also encoded by an $E$-sequence.
The ranking and unranking algorithms have polynomial time complexity $O(n m  - m^2)$ and $O(n^2 + m^2)$, but both of these algorithms require a preprocessing step with time complexity $O((n - m) |A_{n,m}|)$, where $|A_{n,m}|$ is the~cardinality of the set $A_{n,m}$.
Since $|A_{n,m}|$ can be calculated using binomial coefficients (see Equation~\eqref{fRNAnm_binom}), then the preprocessing step has exponential time complexity.
Thus, the preprocessing step is the main disadvantage of this approach, since it has exponential time complexity and requires a large memory space to store the results of preprocessing 
calculations.

In this paper, the main purpose of our research is to develop new effective algorithms (with polynomial time complexity and without preprocessing steps) for ranking and unranking the combinatorial set of RNA secondary structures of length $n$ with $m$ base-pairs.

\section{Main results}

\subsection{Research method}

There are different approaches for developing combinatorial generation algorithms.
In this paper, we propose to apply the~method for developing combinatorial generation algorithms that is based on using AND/OR trees and described in~\cite{Shablya2020}.
To apply this method for developing ranking and unranking algorithms, it is necessary to represent a given combinatorial set in the form of an AND/OR tree structure for which the total number of its variants is equal to the value of the cardinality function of the combinatorial set.

An AND/OR tree is a tree structure that contains nodes of two types: AND nodes and OR nodes (edges to the sons of an~AND node are connected by an additional curve in Figure~\ref{fig:AndOrTree_RNAnm}). 
A variant of an~AND/OR tree is a~tree structure obtained by removing all edges except one for each OR node.
If we know the~cardinality function $f$ of a combinatorial set $A$ and $f$ belongs to the~algebra $\{ \mathbb{N}, +, \times, R \}$ (usage of only natural numbers, addition and product operations, and the primitive recursion operator), then we can construct the corresponding AND/OR tree structure.
Hence, we need to know the required form of the cardinality function for the considered combinatorial set of RNA secondary structures.

\subsection{Combinatorial set}

For a compact representation, we encode an RNA secondary structure by the corresponding Motzkin word. 
Figure~\ref{fig:RNA_Example} presents an example of an RNA secondary structure of length $n = 27$ with $m = 9$ base-pairs.

\begin{figure}[htbp]
	\centering
	\includegraphics[width=14cm]{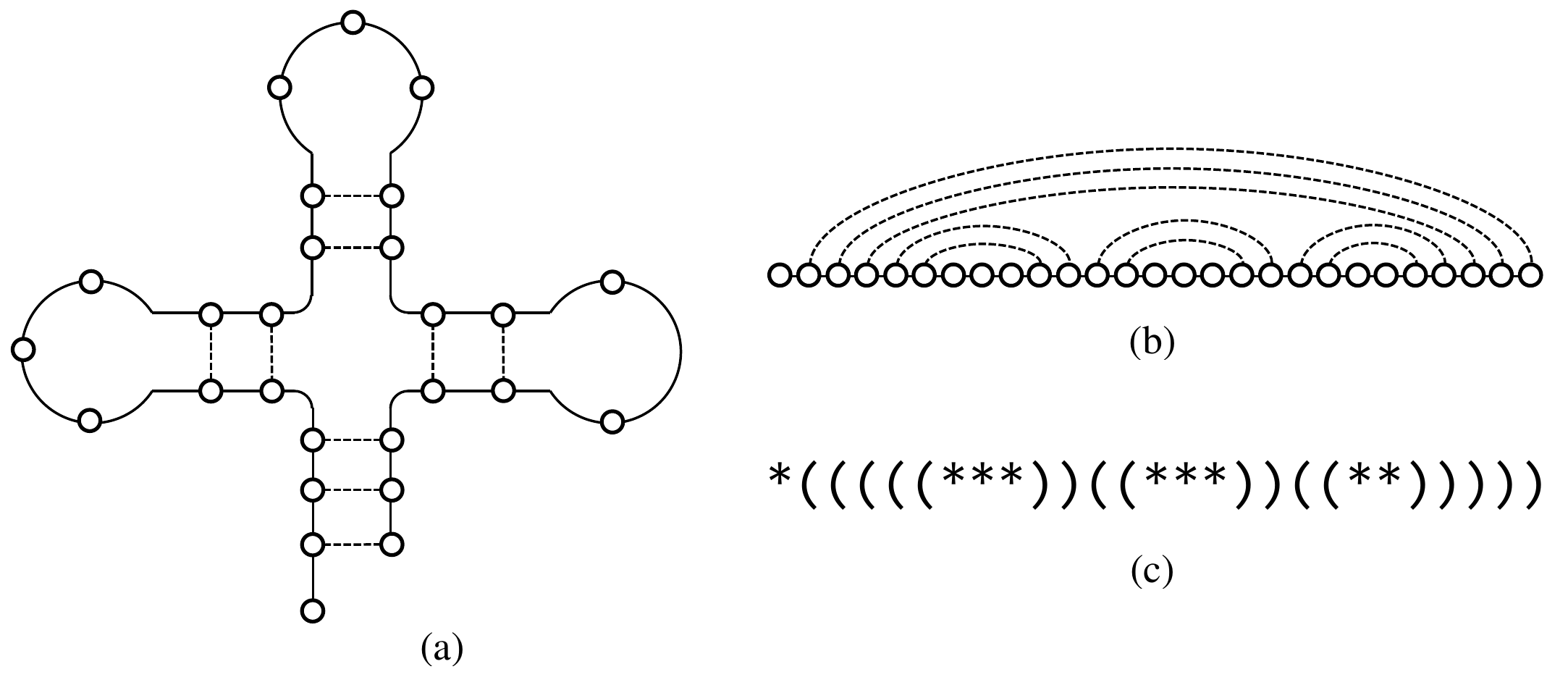}
	\caption{An example of representing an RNA secondary structure of length $n = 27$ with $m = 9$ base-pairs.}
	\label{fig:RNA_Example}
\end{figure}

Figure~\ref{fig:RNA_Example}(a) shows the "real" shape of the~RNA secondary structure as a chain of nucleotides (nodes connected by solid lines) with base-pairs (nodes connected by dashed lines).
In Figure~\ref{fig:RNA_Example}(b), the RNA secondary structure is presented as the straight line of the RNA chain.
In Figure~\ref{fig:RNA_Example}(c), this RNA secondary structure is represented as the corresponding Motzkin word in the form of a correct bracket sequence with the possibility of adding spaces (denoted by '*').
Hence, we can encode an RNA secondary structure by a sequence $a~=~(a_1, \ldots, a_n)$, where $a_i~\in~\{$~'(',~')',~'*'~$\}$.
In this case, '(' and ')' denote a base-pair, '*'~denotes a nucleotide without hydrogen bonds.

The cardinality function of the considered combinatorial set of RNA secondary structures $A_{n,m}$ is defined by~\cite{Schmitt1994}:
\begin{equation}
\label{fRNAnm}
f(n,m) =
|A_{n,m}| =
S_n^m =
S_{n - 1}^m + \sum\limits_{i = 0}^{m - 1}{\sum\limits_{j = 2 i}^{n - 2 (m - i) - 1}{S_{n - 2 - j}^{m - 1 - i} S_j^i}},
\end{equation}
where $S_n^0 = 1$ for $n \geq 0$ and $S_n^m = 0$ for $m \geq \frac{n}{2}$.

The values of $S_n^m$ form the integer sequence $A089732$ in~\cite{OEIS} and can be calculated using the following explicit formula by~\cite{Schmitt1994}:
\begin{equation}
\label{fRNAnm_binom}
S_n^m = \frac{1}{n - m}\binom{n - m}{m}\binom{n - m}{m + 1}
\end{equation}
for $n > m \geq 0$ and $S_0^0 = 1$.
Table~\ref{tab1} presents the first values of $S_n^m$.

\begin{table}[htbp]
\centering
\caption{Several first values of $S_n^m$.}
\label{tab1}
\setlength{\extrarowheight}{2pt}
\begin{tabularx}{0.8\textwidth} { 
  | >{\centering\arraybackslash}X 
  | >{\centering\arraybackslash}X
  | >{\centering\arraybackslash}X
  | >{\centering\arraybackslash}X 
  | >{\centering\arraybackslash}X 
  | >{\centering\arraybackslash}X | }
\hline
&\multicolumn{5}{@{}c|@{}}{$m$}\\
\cline{2-6}
$n$& 0& 1& 2& 3& 4\\
\hline
0&	1&&&&\\
1&	1&&&&\\
2&	1&&&&\\
3&	1&	1&&&\\
4&	1&	3&&&\\
5&	1&	6&	1&&\\
6&	1&	10&	6&&\\
7&	1&	15&	20&		1&\\
8&	1&	21&	50&		10&\\
9&	1&	28&	105&	50&		1\\
10&	1&	36&	196&	175&	15\\
\hline
\end{tabularx}
\end{table}

Generating functions play an important role in the study of integer sequences. 
The following bivariate generating function of the numbers $S_n^m$ is known from~\cite{OEIS}:
\begin{equation*}
S(x,y) =
\sum\limits_{n \geq 0}{\sum\limits_{m \geq 0}{S_n^m x^n y^m}} =
\frac{1 - x + x^2 y - \sqrt{(1 - x - x^2 y)^2 - 4 x^3 y}}{2 x^2 y}.
\end{equation*}

There is a relationship between the numbers $S_n^m$ and the Narayana numbers $N_n^m$.
It is also worth noting that the Narayana numbers $N_n^m$ numbers are related to the well-known Catalan numbers $C_n$ (for more information see~\cite{Petersen2015}).
In this case, the Narayana numbers $N_n^m$ divide combinatorial sets defined by the Catalan numbers $C_n$ into subsets in accordance with some additional parameter $m$.
For~example, one combinatorial interpretation of the Catalan numbers is the number of Dyck $n$-paths, then the Narayana numbers show the~number of Dyck $n$-paths with $m$ peaks (see~\cite{Kruchinin2022}).

The following bivariate generating function of the Narayana numbers $N_n^m$ is known from~\cite{OEIS}:
\begin{equation*}
N(x,y) =
\sum\limits_{n > 0}{\sum\limits_{m > 0}{N_n^m x^n y^m}} =
\frac{1 - x - x y - \sqrt{(1 - x - x y)^2 - 4 x^2 y}}{2 x},
\end{equation*}
and for $n \geq m > 0$ we have
\begin{equation*}
N_n^m = \frac{1}{n} \binom{n}{m - 1} \binom{n}{m}.
\end{equation*}

Hence, we can show the relationship of the numbers $S_n^m$ with the Narayana numbers $N_n^m$  through the~relationship of their generating functions by
\begin{equation*}
S(x,y) =
1 + \frac{N(x,x y)}{x y}
\end{equation*}
or their explicit formulas by
\begin{equation*}
S_n^m =
N_{n - m}^{m + 1},
\end{equation*}
where $n > m \geq 0$ and $S_0^0 = 1$.

\subsection{AND/OR tree}

Since Equation~\eqref{fRNAnm} belongs to the algebra $\{ \mathbb{N}, +, \times, R \}$, we can construct the AND/OR tree structure for~$S_n^m$, which is presented in Figure~\ref{fig:AndOrTree_RNAnm}.
In this AND/OR tree:
\begin{itemize}
\item each addition in Equation~\eqref{fRNAnm} is represented as an OR node where all terms are represented as its sons;
\item each multiplication in Equation~\eqref{fRNAnm} is represented as an AND node where all factors are represented as its sons;
\item each recursive operation in Equation~\eqref{fRNAnm} is represented as recursion in the AND/OR tree (denoted by a node with a triangle).
\end{itemize}

\begin{figure}[htbp]
	\centering
	\includegraphics[width=13.3cm]{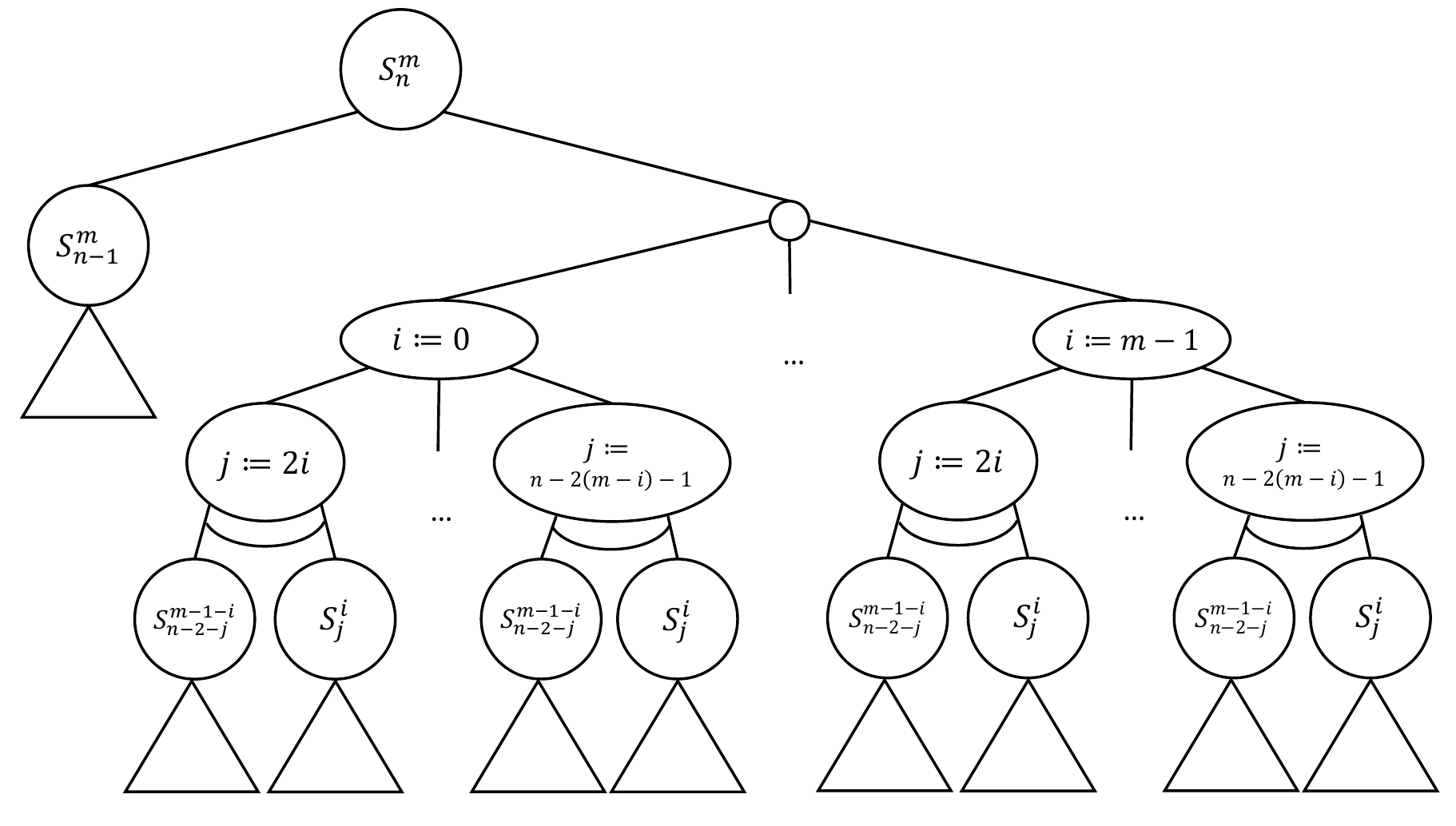}
	\caption{An AND/OR tree for $S_n^m$.}
	\label{fig:AndOrTree_RNAnm}
\end{figure}

For this AND/OR tree structure, there are the following initial conditions:
\begin{itemize}
\item if we get a node labeled $S_n^0$, then it is a leaf node in the AND/OR tree where $n \geq 0$
\item if we get a node labeled $S_n^m$ where $m \geq \frac{n}{2}$, then we need to remove this node and also remove the~subtree containing this node until the closest son of an OR node.
\end{itemize}

Figure~\ref{fig:AndOrTree_RNAnm_Example} presents an example of an AND/OR tree structure for $S_n^m$ where $n = 6$ and $m = 2$.
The total number of its variants is equal to $S_6^2 = 6$.

\begin{figure}[hp]
	\centering
	\includegraphics[width=14cm]{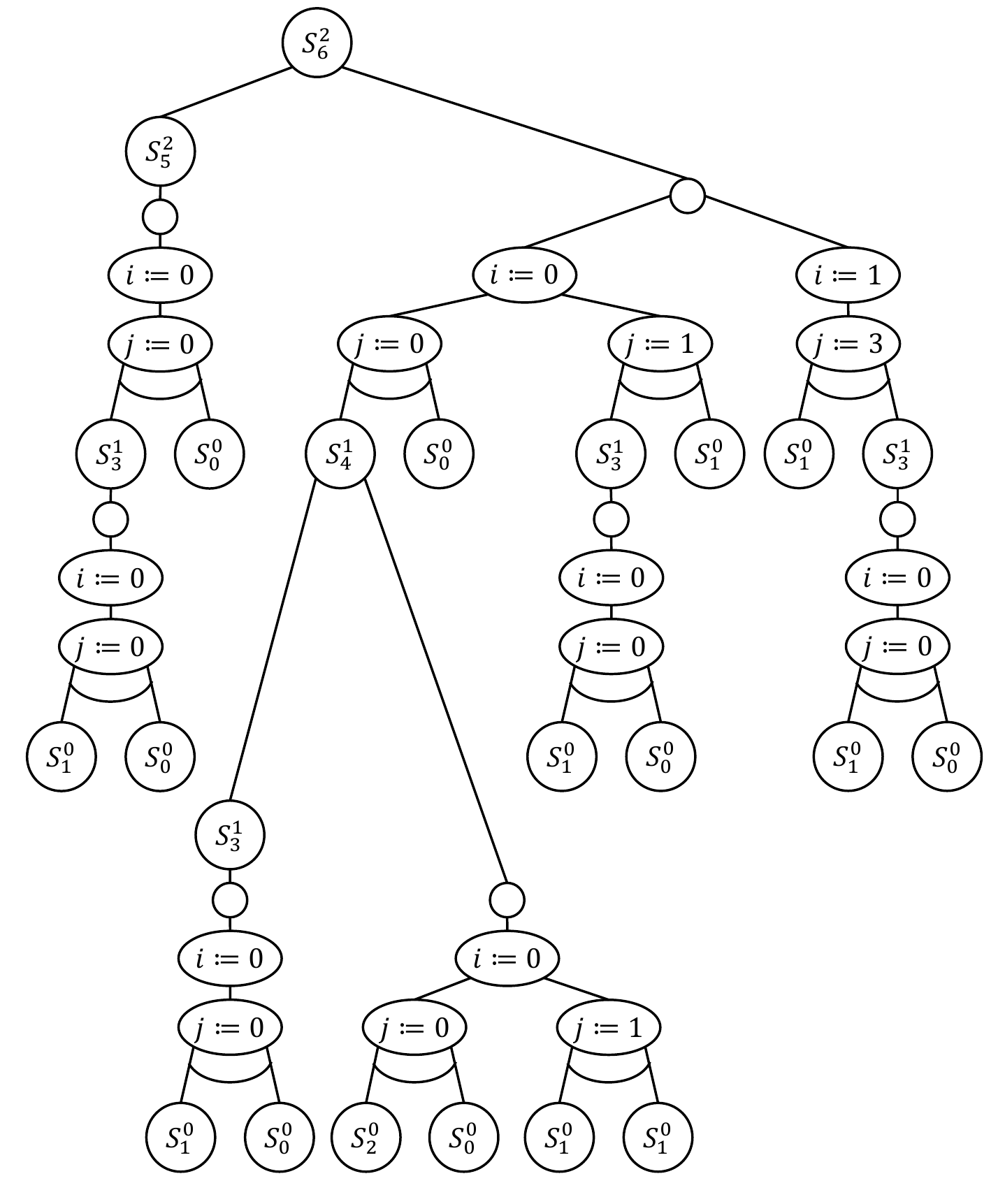}
	\caption{The AND/OR tree for $S_6^2$.}
	\label{fig:AndOrTree_RNAnm_Example}
\end{figure}

\begin{theorem}
There is a bijection between the set of RNA secondary structures of length $n$ with $m$ base-pairs and the set of variants of the AND/OR tree for $S_n^m$.
\end{theorem}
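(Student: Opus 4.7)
The plan is to prove the bijection by induction on $n$, using the natural recursive decomposition of a Motzkin word into its prefix and the substructure delimited by the first base-pair. I will define a map $\phi$ from RNA secondary structures to variants of the AND/OR tree, exhibit an inverse $\psi$, and check that both are well defined via the initial conditions.

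First I would establish the decomposition lemma: for any Motzkin word $a = (a_1,\ldots,a_n)$ encoding an RNA secondary structure with $m$ base-pairs, exactly one of the following holds. Either (i) $a_1={}'{*}'$, and then $(a_2,\ldots,a_n)$ is an RNA secondary structure of length $n-1$ with $m$ base-pairs; or (ii) $a_1={}'('$, in which case the position $k$ of the unique matching $')'$ is determined by the bracket-matching rule, and setting $j=k-2$ and letting $i$ be the number of base-pairs inside the matched pair, $(a_2,\ldots,a_{k-1})$ is an RNA secondary structure of length $j$ with $i$ base-pairs and $(a_{k+1},\ldots,a_n)$ is an RNA secondary structure of length $n-2-j$ with $m-1-i$ base-pairs. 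The ranges $0 \le i \le m-1$ and $2i \le j \le n-2(m-i)-1$ are exactly those appearing in~\eqref{fRNAnm}, because the inner substructure must accommodate $i$ base-pairs (hence $j \ge 2i$) and the outer substructure must accommodate $m-1-i$ base-pairs (hence $n-2-j \ge 2(m-1-i)$, i.e.\ $j \le n-2(m-i)-1$).

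Next I would define $\phi$ by structural recursion following this decomposition: at the root OR node, if case (i) holds, select the son corresponding to the term $S_{n-1}^m$ and recursively build the variant for $(a_2,\ldots,a_n)$; if case (ii) holds, select the son corresponding to the AND node labelled by the pair $(i,j)$ and attach to its two children the variants produced recursively from the inner and outer substructures. The base cases handle the AND/OR tree's initial conditions precisely: when $m=0$ the word must be $({*},\ldots,{*})$ and is sent to the unique leaf $S_n^0$; the branches with $m \ge n/2$ are pruned because no RNA secondary structure of those parameters exists, matching the rule that subtrees with infeasible parameters are removed up to the nearest OR ancestor. Conversely, $\psi$ reads the OR choices of a variant top-down to reconstruct the word: the choice of the $S_{n-1}^m$-son emits '{*}' and recurses; the choice of an $(i,j)$-son emits '(' at the current position, writes the recursively reconstructed inner word, emits ')' as its match, and appends the recursively reconstructed outer word.

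The induction that $\psi \circ \phi = \mathrm{id}$ and $\phi \circ \psi = \mathrm{id}$ is then routine, with parameter $n$ strictly decreasing in each recursive call (the star case drops to $n-1$, the bracket case splits into parameters $j<n$ and $n-2-j<n$). The main obstacle, and the step that must be argued carefully, is the uniqueness claim in the decomposition lemma: that a Motzkin word satisfying case (ii) determines the splitting parameters $(i,j)$ unambiguously. This reduces to the standard fact that in a balanced bracket sequence the position of the ')' matching the leading '(' is uniquely determined (the first index $k$ at which the running balance of brackets returns to zero), after which $i$ is read off as the number of '(' inside; once this is established, the OR choice made by $\phi$ is uniquely forced by $a$, which gives injectivity, while surjectivity follows because every AND/OR variant at level $(n,m)$ directly yields a legal Motzkin word of the required type via $\psi$. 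Combined with the cardinality identity $|A_{n,m}| = S_n^m$ recorded in~\eqref{fRNAnm}, this completes the bijection.
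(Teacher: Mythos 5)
Your proposal is correct and follows essentially the same route as the paper: decompose the Motzkin word at its first symbol, match each case to a term of the recurrence~\eqref{fRNAnm}, and recurse, with the uniqueness of the matching closing bracket guaranteeing that the OR choices are forced (your write-up is in fact more explicit than the paper's about well-definedness and the inverse map). The only difference is a labelling convention: the paper takes $j$ and $i$ to be the length and number of pairs of the suffix \emph{after} the matching ')', so the factor $S_{n-2-j}^{m-1-i}$ corresponds to the part \emph{inside} the first matched pair, whereas you attach $S_j^i$ to the inside; both assignments yield valid bijections with the same AND/OR tree.
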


\begin{proof}
Let represent RNA secondary structures of length $n$ with $m$ base-pairs as correct bracket sequences of length $n$ with $m$ pairs of brackets with the possibility of adding spaces.
The total number of such RNA secondary structures is equal to $S_n^m$.
The~total number of variants of the AND/OR tree for $S_n^m$ presented in Figure~\ref{fig:AndOrTree_RNAnm} is also equal to $S_n^m$.
Hence, a bijection between the set of RNA secondary structures of length $n$ with $m$ base-pairs and the set of variants of the AND/OR tree for $S_n^m$ is defined by the following rules:
\begin{itemize}
\item each selected left son of the OR node labeled $S_n^m$ determines the addition of a space to the sequence $a = (a_1, \ldots, a_n)$: the~sequence $a$ is represented as $a = \texttt{concat}$(('*'),$b$), i.e. the first element of the sequence is the added space ($a_1 = $~'*') and the remaining part of the sequence denoted by $b = (a_2, \ldots, a_n)$ is a correct bracket sequence of length $n - 1$ with $m$~pairs of brackets (it corresponds to the subtree of the node labeled $S_{n - 1}^m$);
\item each selected right son of the OR node labeled $S_n^m$ determines the addition of a pair of brackets to the sequence $a = (a_1, \ldots, a_n)$: the sequence $a$ is represented as $a = \texttt{concat}$(('('),$b$,(')'),$c$), where $b = (a_2, \ldots, a_{n - j - 1})$ is a correct bracket sequence of length $n - 2 - j$ with $m - 1 - i$ pairs of brackets (it corresponds to the subtree of the node labeled $S_{n - 2 - j}^{m - 1 - i}$) and $c = (a_{n - j + 1}, \ldots, a_n)$ is a~correct bracket sequence of length $j$ with $i$ pairs of brackets (it corresponds to the~subtree of the~node labeled $S_j^i$), and the values of $i$ and $j$ are determined by the selected sons in the AND/OR~tree.
\item each leaf node labeled $S_n^0$ determines the addition of $n$ spaces to the sequence $a = (a_1, \ldots, a_n)$.
\end{itemize}
\end{proof}

In the obtained rules for the bijection, the function $\texttt{concat}$ denotes merging sequences, i.e. for two given sequences $a = (a_1, \ldots, a_n)$ and $b = (b_1, \ldots, b_m)$ we get the following sequence:

\begin{equation*}
c = \texttt{concat}(a,b) = (a_1, \ldots, a_n, b_1, \ldots, b_m) = (c_1, \ldots, c_{n + m}).
\end{equation*}

For a compact representation, a variant of the AND/OR tree for $S_n^m$ will be encoded by a sequence $v = (v_1, v_2)$, where:
\begin{itemize}
\item an empty sequence $v = (\,)$ corresponds to the selection of a leaf node labeled $S_n^0$;
\item $v_1$ corresponds to the selection of the left son ($v_1 = 0$) or the right son ($v_1 = 1$) of the OR node labeled $S_n^m$;
\item if $v_1 = 0$, then $v_2$ corresponds to the variant of the subtree of the node labeled $S_{n - 1}^m$;
\item if $v_1 = 1$, then $v_2$ is encoded by a sequence $v_2 = (I, J, vl, vr)$, where:
\begin{itemize}
\item $I$ corresponds to the selected value of $i$ in the AND/OR tree;
\item $J$ corresponds to the selected value of $j$ in the AND/OR tree;
\item $vl$ corresponds to the variant of the subtree of the node labeled $S_{n - 2 - j}^{m - 1 - i}$;
\item $vr$ corresponds to the variant of the subtree of the node labeled $S_j^i$.
\end{itemize}
\end{itemize}

Algorithm~\ref{alg:RNASecondaryStructureToVariant} and Algorithm~\ref{alg:VariantToRNASecondaryStructure} present the developed rules for the bijection between the set of RNA secondary structures and the set of variants of the corresponding AND/OR tree.
In these algorithms, $()$~denotes an empty sequence.

\begin{algorithm2e}[htbp]
\DontPrintSemicolon
\SetKwFunction{ToVariant}{RNASSToVariant}
\caption{An algorithm for converting a correct bracket sequence with the possibility of adding spaces into a variant of the AND/OR tree for $S_n^m$.}
\label{alg:RNASecondaryStructureToVariant}
\ToVariant($a = (a_1, \ldots, a_n)$, $n$, $m$)\\
\Begin
{
	\lIf{$m = 0$}{\Return{$()$}}
	\lIf{$a_1 = $\normalfont{'*'}}{$v := (0, $ \ToVariant($(a_2, \ldots, a_n)$, $n - 1$, $m$)$)$}
	\Else
	{
		$s_1 := 0$\\
		$s_2 := 0$\\
		\For{$i := 1$ \KwTo $n$}
		{
			\lIf{$a_i = $\normalfont{'('}}{$s_1 := s_1 + 1$}
			\lIf{$a_i = $\normalfont{')'}}{$s_2 := s_2 + 1$}
			\If{$s_1 = s_2$}
			{
				$I := m - s_1$\\
				$J := n - i$\\
				$b := (a_2, \ldots, a_{n - J - 1})$\\
				$c := (a_{n - J + 1}, \ldots, a_n)$\\
				\textbf{break}
			}
		}
		$vl := $ \ToVariant($b$, $n - 2 - J$, $m - 1 - I$)\\
		$vr := $ \ToVariant($c$, $J$, $I$)\\
		$v := (1, (I, J, vl, vr))$
	}
	\Return{$v$}
}
\end{algorithm2e}

\begin{algorithm2e}[htbp]
\DontPrintSemicolon
\SetKwFunction{ToObject}{VariantToRNASS}
\SetKwFunction{Concat}{concat}
\caption{An algorithm for converting a variant of the AND/OR tree for $S_n^m$ into a correct bracket sequence with the~possibility of adding spaces.}
\label{alg:VariantToRNASecondaryStructure}
\ToObject($v = (v_1, v_2)$, $n$, $m$)\\
\Begin
{
	\lIf{$m = 0$}{\Return{$(a_1, \ldots, a_n) := ($\normalfont{'*'}$, \ldots, $\normalfont{'*'}$)$}}
	\lIf{$v_1:= 0$}{$a := $ \Concat($($'*'$)$, \ToObject($v_2$, $n - 1$, $m$))}
	\Else{
		$(I, J, vl, vr) := v_2$\\
		$b := $ \ToObject($vl$, $n - 2 - J$, $m - 1 - I$)\\
		$c := $ \ToObject($vr$, $J$, $I$)\\
		$a := $ \Concat($($'('$)$, $b$, $($')'$)$, $c$)
	}
	\Return{$a$}
}
\end{algorithm2e}

Let us consider the computational complexity of these algorithms in more detail:
\begin{itemize}
\item In Algorithm~\ref{alg:RNASecondaryStructureToVariant}, there are $n - 2 m$ recursive calls where $a_1 = $~'*' (each such recursive call requires making a new sequence of length $n - 1$) and there are $m$ recursive calls where $a_1 \neq $~'*' (each such recursive call requires making calculations maximum $n$ times in the for-loop and making new sequences of maximum length $n - 2$).
Hence, assuming algebraic operations with numbers in $O(1)$, Algorithm~\ref{alg:RNASecondaryStructureToVariant} has polynomial time complexity $O(n (n - m) + n m)$;
\item In Algorithm~\ref{alg:VariantToRNASecondaryStructure}, there are maximum $n - 2 m$ recursive calls where $v_1 = 0$ (each such recursive call requires making a~new sequence of length $n$) and there are $m$ recursive calls where $v_1 = 1$ (each such recursive call requires making a new sequence of length $n$).
Hence, assuming algebraic operations with numbers in $O(1)$, Algorithm~\ref{alg:VariantToRNASecondaryStructure} has polynomial time complexity $O(n (n - m) + n m)$.
\end{itemize}

Thus, both presented algorithms have polynomial time complexity $O(n (n - m) + n m)$.
Also, taking into account the constraint $m < \frac{n}{2}$, we get $O(n^2)$.

The obtained bijection make it possible to apply the combinatorial generation method mentioned above to the development algorithms for ranking and unranking the variants of the constructed AND/OR tree structure for $S_n^m$.
Combining these combinatorial generation algorithms with the obtained bijection, we~will get algorithms for ranking and unranking the considered combinatorial set of RNA secondary structures.

\subsection{Ranking and unranking algorithms}

Applying the general approach described in~\cite{Shablya2020}, we develop algorithms for ranking and unranking the variants of the constructed AND/OR tree structure.
Algorithm~\ref{alg:RankVariant_RNASecondaryStructure} presents the required actions for ranking the variants of the AND/OR tree for $S_n^m$, Algorithm~\ref{alg:UnrankVariant_RNASecondaryStructure} presents the required actions for unranking its variants.

Let us consider the computational complexity of these algorithms in more detail:
\begin{itemize}
\item In Algorithm~\ref{alg:RankVariant_RNASecondaryStructure}, there are maximum $n - 2 m$ recursive calls where $v_1 = 0$ (each such recursive call requires making only one assignment) and there are $m$ recursive calls where $v_1 = 1$ (each such recursive call requires making calculations of $S_j^i$ maximum $m$ times for $i$ and $n - 2 m$ times for $j$).
Applying Equation~\eqref{fRNAnm_binom}, the calculation of $S_n^m$ has polynomial time complexity $O(m)$ for $m < n - 2 m$ and $O(n - m)$ for $m > n - 2 m$. 
Hence, assuming algebraic operations with numbers in $O(1)$, Algorithm~\ref{alg:RankVariant_RNASecondaryStructure} has polynomial time complexity $O(m^2 (n - m))$ for $m < n - 2 m$ and $O(m (n - m)^2)$ for $m > n - 2 m$.
\item In Algorithm~\ref{alg:UnrankVariant_RNASecondaryStructure}, there are maximum $n - 2 m$ recursive calls where $r < S_{n - 1}^m$ (each such recursive call requires making only one assignment) and there are $m$ recursive calls where $r \geq S_{n - 1}^m$ (each such recursive call requires making calculations of $S_J^I$ maximum $m$ times for $I$ and $n - 2 m$ times for $J$).
Hence, assuming algebraic operations with numbers in $O(1)$, Algorithm~\ref{alg:UnrankVariant_RNASecondaryStructure} has polynomial time complexity $O(m^2 (n - m))$ for $m < n - 2 m$ and $O(m (n - m)^2)$ for $m > n - 2 m$.
\end{itemize}

Taking into account the constraint $m < \frac{n}{2}$, we get polynomial time complexity $O(n^3)$ for both presented algorithms.
For comparison, the ranking and unranking algorithms of~\cite{Seyedi-Tabari2010} have better polynomial time complexity $O(n^2)$, however they require a preprocessing step that has exponential time and space complexity.
Since the developed algorithms do not require preprocessing steps, they have better overall time and space complexity.
In addition, the developed algorithms can be supplemented with a preprocessing step with the calculation of all required values of $S_n^m$.
As a result, we get the preprocessing step with polynomial time complexity $O(n^2)$ and the new ranking and unranking algorithms with the same polynomial time complexity $O(n^2)$.

\begin{algorithm2e}[htbp]
\DontPrintSemicolon
\SetKwFunction{Rank}{RankVariant\_RNASS}
\caption{An algorithm for ranking a variant of the AND/OR tree for $S_n^m$.}
\label{alg:RankVariant_RNASecondaryStructure}
\Rank($v = (v_1, v_2)$, $n$, $m$)\\
\Begin
{
	\lIf{$m = 0$}{\Return{$0$}}
	\lIf{$v_1 = 0$}{$r := $ \Rank($v_2$, $n - 1$, $m$)}
	\Else
	{
		$(I, J, vl, vr) := v_2$\\
		$l_1 := $ \Rank($vl$, $n - 2 - J$, $m - 1 - I$)\\
		$l_2 := $ \Rank($vr$, $J$, $I$)\\
		$r := l_1 + S_{n - 2 - J}^{m - 1 - I} l_2 + S_{n - 1}^m + \sum\limits_{i = 0}^{I - 1}{\sum\limits_{j = 2 i}^{n - 2 (m - i) - 1}{S_{n - 2 - j}^{m - 1 - i} S_j^i}} + \sum\limits_{j = 2 I}^{J - 1}{S_{n - 2 - j}^{m - 1 - I} S_j^I}$ 
	}
	\Return{$r$}
}
\end{algorithm2e}

\newpage
\begin{algorithm2e}[!htbp]
\DontPrintSemicolon
\SetKwFunction{Unrank}{UnrankVariant\_RNASS}
\caption{An algorithm for unranking a variant of the AND/OR tree for $S_n^m$.}
\label{alg:UnrankVariant_RNASecondaryStructure}
\Unrank($r$, $n$, $m$)\\
\Begin
{
	\lIf{$m = 0$}{\Return{$()$}}
	\lIf{$r < S_{n - 1}^m$}{$v := (0,$ \Unrank($r$, $n - 1$, $m$)$)$}
	\Else
	{
		$r := r - S_{n - 1}^m$\\
		$sum := 0$\\
		$I := 0$\\
		$J := 0$\\
		\While{$sum + S_{n - 2 - J}^{m - 1 - I} S_J^I \leq r$}
		{
			$sum := sum + S_{n - 2 - J}^{m - 1 - I} S_J^I$\\
			\lIf{$J \leq n - 2 (m - I) - 1$}{$J := J + 1$}
			\Else
			{
				$I := I + 1$\\
				$J := 2 I$
			}
		}
		$r := r - sum$\\
		$l_1 := r \mod{S_{n - 2 - J}^{m - 1 - I}}$\\
		$l_2 := \left\lfloor {r}\,/\,{S_{n - 2 - J}^{m - 1 - I}} \right\rfloor$\\
		$vl := $ \Unrank($l_1$, $n - 2 - J$, $m - 1 - I$)\\
		$vr := $ \Unrank($l_2$, $J$, $I$)\\
		$v := (1, (I, J, vl, vr))$
	}
	\Return{$v$}
}
\end{algorithm2e}

Combining Algorithm~\ref{alg:RNASecondaryStructureToVariant} with Algorithm~\ref{alg:RankVariant_RNASecondaryStructure}, we get the algorithm for ranking the combinatorial set of RNA secondary structures of length $n$ with $m$ base-pairs.
The combination of Algorithm~\ref{alg:UnrankVariant_RNASecondaryStructure} with Algorithm~\ref{alg:VariantToRNASecondaryStructure} performs the inverse operation (unranking).
Table~\ref{tab:RNASecondaryStructure1} and Table~\ref{tab:RNASecondaryStructure2} present examples of ranking the combinatorial set of RNA secondary structures of length $n$ with $m$~base-pairs (that are represented as correct bracket sequences of length $n$ with $m$ pairs of brackets with the possibility of adding spaces).

\begin{table}[!htbp]
\centering
\caption{An example of ranking the 
set of RNA secondary structures of length $n = 6$ with $m = 2$ base-pairs.}
\label{tab:RNASecondaryStructure1}
\setlength{\extrarowheight}{2pt}
\begin{tabularx}{1.0\textwidth} { 
  | >{\centering\arraybackslash}c 
  | >{\centering\arraybackslash}X
  | >{\centering\arraybackslash}c | }
\hline
Correct bracket sequence & Variant of AND/OR tree & Rank\\
\hline
	$* \, ( \, ( \, * \, ) \, )$ & $(0,(1,(0,0,(1,(0,0,(),())),())))$ & $0$ \\
	$( \, * \, ( \, * \, ) \, )$ & $(1,(0,0,(0,(1,(0,0,(),()))),()))$ & $1$ \\
	$( \, ( \, * \, * \, ) \, )$ & $(1,(0,0,(1,(0,0,(),())),()))$ & $2$ \\
	$( \, ( \, * \, ) \, * \, )$ & $(1,(0,0,(1,(0,1,(),())),()))$ & $3$ \\
	$( \, ( \, * \, ) \, ) \, *$ & $(1,(0,1,(1,(0,0,(),())),()))$ & $4$ \\
	$( \, * \, ) \, ( \, * \, )$ & $(1,(1,3,(),(1,(0,0,(),()))))$ & $5$ \\
\hline
\end{tabularx}
\end{table}

\newpage
\begin{table}[htbp]
\centering
\caption{An example of ranking the 
set of RNA secondary structures of length $n = 8$ with $m = 3$ base-pairs.}
\label{tab:RNASecondaryStructure2}
\setlength{\extrarowheight}{2pt}
\begin{tabularx}{1.0\textwidth} { 
  | >{\centering\arraybackslash}c 
  | >{\centering\arraybackslash}X
  | >{\centering\arraybackslash}c | }
\hline
Correct bracket sequence & Variant of AND/OR tree & Rank\\
\hline
$* \, ( \, ( \, ( \, * \, ) \, ) \, )$ & $(0,(1,(0,0,(1,(0,0,(1,(0,0,(),())),())),())))$ & $0$ \\
	$( \, * \, ( \, ( \, * \, ) \, ) \, )$ & $(1,(0,0,(0,(1,(0,0,(1,(0,0,(),())),()))),()))$ & $1$ \\
	$( \, ( \, * \, ( \, * \, ) \, ) \, )$ & $(1,(0,0,(1,(0,0,(0,(1,(0,0,(),()))),())),()))$ & $2$ \\
	$( \, ( \, ( \, * \, * \, ) \, ) \, )$ & $(1,(0,0,(1,(0,0,(1,(0,0,(),())),())),()))$ & $3$ \\
	$( \, ( \, ( \, * \, ) \, * \, ) \, )$ & $(1,(0,0,(1,(0,0,(1,(0,1,(),())),())),()))$ & $4$ \\
	$( \, ( \, ( \, * \, ) \, ) \, * \, )$ & $(1,(0,0,(1,(0,1,(1,(0,0,(),())),())),()))$ & $5$ \\
	$( \, ( \, * \, ) \, ( \, * \, ) \, )$ & $(1,(0,0,(1,(1,3,(),(1,(0,0,(),())))),()))$ & $6$ \\
	$( \, ( \, ( \, * \, ) \, ) \, ) \, *$ & $(1,(0,1,(1,(0,0,(1,(0,0,(),())),())),()))$ & $7$ \\
	$( \, ( \, * \, ) \, ) \, ( \, * \, )$ & $(1,(1,3,(1,(0,0,(),())),(1,(0,0,(),()))))$ & $8$ \\
	$( \, * \, ) \, ( \, ( \, * \, ) \, )$ & $(1,(2,5,(),(1,(0,0,(1,(0,0,(),())),()))))$ & $9$ \\
\hline
\end{tabularx}
\end{table}

\section{Conclusions}

In this paper, we study the combinatorial set of RNA secondary structures of length $n$ with $m$ base-pairs.
For a compact representation, an RNA secondary structure is encoded by a correct bracket sequence with the possibility of adding spaces.
For this combinatorial set, we construct an AND/OR tree structure, find a bijection between the combinatorial set and the set of variants of the AND/OR tree, and develop algorithms for ranking and unranking the variants of the AND/OR tree.

The developed ranking and unranking algorithms have polynomial time complexity $O(m^2 (n - m))$ for $m < n - 2 m$ and $O(m (n - m)^2)$ for $m > n - 2 m$.
The conducted computational experiments also confirmed the derived time complexity of the~algorithms.

The existing algorithms of Seyedi-Tabari et al. for ranking and unranking the considered combinatorial set of RNA secondary structures of length $n$ with $m$ base-pairs also have polynomial time complexity.
However, the operation of these ranking and unranking algorithms requires a preprocessing step that has exponential time and space complexity.
In contrast to the algorithms of Seyedi-Tabari et al., the new algorithms do not require preprocessing steps.
Thus, the developed ranking and unranking algorithms have better time complexity (the resulting complexity is polynomial).


\nocite{*}
\bibliographystyle{abbrvnat}
\bibliography{article}
\label{sec:biblio}

\end{document}